\newtheorem{thm}{Theorem} 
\newtheorem{prop}[thm]{Proposition}
\newtheorem{defn}[thm]{Definition}
\newtheorem{rem}[thm]{Remark}
\newcolumntype{P}[1]{>{\centering\arraybackslash}p{#1}}
\newcolumntype{M}[1]{>{\centering\arraybackslash}m{#1}}
\renewcommand{\hat}{\widehat}
\def\do#1{\@namedef{#1c}{\ensuremath{\mathcal{#1}}}}
\title{Best subset selection in linear regression via bi-objective mixed integer linear programming}
\author[$\dag$]{Hadi Charkhgard \thanks{Corresponding author. E-mail address: \textit{hcharkhgard@usf.edu}}}
\author[$\ddag$]{Ali Eshragh}
\affil[$\dag$]{Department of Industrial and Management Systems Engineering, University of South Florida}
\affil[$\ddag$]{School of Mathematical and Physical Sciences, The University of Newcastle, NSW, Australia}
\date{} 
\begin{document}

\doublespacing

\maketitle
\begin{abstract}
We study the problem of choosing the best subset of p features in linear regression given n observations. This problem naturally contains two objective functions including minimizing the amount of bias and minimizing the number of predictors. The existing approaches transform the problem into a single-objective optimization problem. We explain the main weaknesses of existing approaches, and to overcome their drawbacks, we propose a bi-objective mixed integer linear programming approach. A computational study shows the efficacy of the proposed approach. \\

\noindent\textbf{Keywords:} linear regression; best subset selection problem; bi-objective mixed integer linear programming 
\end{abstract}


\section{Introduction}

Linear regression models should have two important characteristics in practice including \textit{prediction accuracy} and \textit{interpretability} \citep{Tibshirani94}. The traditional approach of constructing regression models is to minimize the sum of squared residuals. It is evident that models obtained in this approach have low biases. However, their prediction accuracy can be low due to their large variances. Furthermore, models constructed by this approach may contain a large number of predictors and so data analysts struggle in interpreting them. 

In general, reducing the number of predictors in a regression model can improve not only the \textit{interpretability} but also, sometimes, the \textit{prediction accuracy} by reducing the variance \citep{Tibshirani94}. Hence, there is often a trade-off between the amount of bias and the practical characteristics of a regression model. In other words, finding a desirable regression model is naturally a bi-objective optimization problem that minimizes the amount of bias and the number of predictors, simultaneously. This classical Statistics problem, that is 
\begin{quote}
	\emph{finding the best subset of $p$ potential predictors and estimating their coefficients in a linear regression model given $n$ observations}
\end{quote}
is called the \textit{Best Subset Selection Problem} (\texttt{BSSP}). Due to the intensive computation of solving bi-objective optimization problems, \texttt{BSSP} has always been transformed into a single-objective optimization problem (e.g., see \cite{Ren2010} and \cite{bertsimas2016}). Such transformation may cause practical issues that will be discussed in this paper. 

Albeit, recent algorithmic and theoretical advances have made bi-objective optimization problems computationally tractable in practice. More precisely, under some mild conditions, we are now able to solve them reasonably fast. To the best of our knowledge, this paper is the first work to tackle \texttt{BSSP} utilizing a bi-objective optimization approach. 

The structure of the paper is organized as follows: In Section \ref{sec:SinglevsBi-objective}, two main single-objective approaches for \texttt{BSSP} are presented and their practical drawbacks are discussed. In Section~\ref{sec:formulation}, a new bi-objective optimization formulation for \texttt{BSSP} is developed. In Section~\ref{sec:Comp}, the computational results are reported. Finally, in Section~\ref{sec:Finalremarks}, some concluding remarks are provided.

\section{Single-objective models for \texttt{BSSP}}
\label{sec:SinglevsBi-objective}

As discussed in Section 1, \texttt{BSSP} is naturally a bi-objective optimization problem which can be stated as $\min_{\hat{\boldsymbol{\beta}}\in \mathcal{F}}\  \{z_1(\hat{\boldsymbol{\beta}}),z_2(\hat{\boldsymbol{\beta}})\}$
where $\mathcal{F}$ is the feasible set of parameter estimator vector $\hat{\boldsymbol{\beta}}$, $z_1(\hat{\boldsymbol{\beta}})$ is the total bias and $z_2(\hat{\boldsymbol{\beta}})$ is the number of predictors. In the literature, the following two approaches have widely been used to convert \texttt{BSSP} to a single-objective optimization problem:  

\begin{enumerate}
	\item [(i)] \textit{The weighted sum approach:} Given some $\lambda> 0$, \texttt{BSSP} has been reformulated as 
	$$
	\min_{\hat{\boldsymbol{\beta}}\in \mathcal{F}}\   z_1(\hat{\boldsymbol{\beta}})+\lambda z_2(\hat{\boldsymbol{\beta}}).$$
	\item [(ii)] \textit{The goal programming approach:} Given some $k\in \mathbb{Z}_\ge$, \texttt{BSSP} has been reformulated as 
	$$
	\min_{\hat{\boldsymbol{\beta}}\in \mathcal{F}:\  z_2(\hat{\boldsymbol{\beta}})\le k}\   z_1(\hat{\boldsymbol{\beta}}).
	$$
\end{enumerate}
For further details, interested readers are referred to \cite{Chen1998}, \cite{Meinshausen2006}, \cite{zhang2008}, \cite{Bickel2009}, \cite{Candes2009}, \cite{Ren2010}, \cite{Ciuperca2014} and \cite{Wang2017} for the weighted sum approach, and \cite{Miller2002} and \cite{bertsimas2016} for the goal programming approach. Although, those two optimization programs (i) and (ii) can be solved significantly faster than a bi-objective optimization problem, their drawbacks are explained and illustrated here. However, we first need to present some notation and definitions from bi-objective optimization. 

\subsection{Preliminaries}
\label{sec:Preliminaries}

A {\em Bi-Objective Mixed Integer Linear Program} (BOMILP) can be stated as follows:
\begin{equation}
\label{eq:MO}
\min_{(\boldsymbol{x}_1,\boldsymbol{x}_2) \in \mathcal{X}} \ \{z_1(\boldsymbol{x}_1,\boldsymbol{x}_2),z_2(\boldsymbol{x}_1,\boldsymbol{x}_2)\},
\end{equation}

\noindent where $\mathcal{X}:=\big\{(\boldsymbol{x}_1,\boldsymbol{x}_2)\in \mathbb{Z}^{n_1}_\ge \times \mathbb{R}^{n_2}_\ge:A_1\boldsymbol{x}_1+A_2\boldsymbol{x}_2\le \boldsymbol{b}\big\}$ represents the \textit{feasible set in the decision space}, $\mathbb{Z}^{n_1}_\ge:=\{\boldsymbol{s}\in \mathbb{Z}^{n_1}:\boldsymbol{s}\ge \boldsymbol{0}\}$,  $\mathbb{R}^{n_2}_\ge:=\{\boldsymbol{s}\in \mathbb{R}^{n_2}:\boldsymbol{s}\ge \boldsymbol{0}\}$, $A_1\in \mathbb{R}^{m\times n_1}$, $A_2\in \mathbb{R}^{m\times n_2}$, and $\boldsymbol{b}\in \mathbb{R}^{m}$.  It is assumed that $\mathcal{X}$ is \textit{bounded} and $z_i(\boldsymbol{x}_1,\boldsymbol{x}_2)=\boldsymbol{c}^\intercal_{i,1}\boldsymbol{x}_1+\boldsymbol{c}^\intercal_{i,2}\boldsymbol{x}_2$  where $\boldsymbol{c}_{i,1}\in \mathbb{R}^{n_1}$ and $\boldsymbol{c}_{i,2}\in \mathbb{R}^{n_2}$ for $i=1,2$ represents a linear objective function. The image $\mathcal{Y}$ of $\mathcal{X}$ under vector-valued function
$\boldsymbol{z}:=(z_1,z_2)^\intercal$ represents the \textit{feasible set in the objective/criterion space}, that is $\mathcal{Y} := \{\boldsymbol{o}\in \mathbb{R}^2: \boldsymbol{o}=\boldsymbol{z}(\boldsymbol{x}_1,\boldsymbol{x}_2)$ for all $(\boldsymbol{x}_1,\boldsymbol{x}_2)\in \mathcal{X} \}$. Note that BOMILP is called \textit{Bi-objective Linear Program} (BOLP) and \textit{Bi-Objective Integer Linear Program} (BOILP) for the special cases of $n_1=0$ and $n_2=0$, respectively. 

\begin{defn}
	\label{def:nondominant}
	\begin{doublespace}
		A feasible solution $(\boldsymbol{x}_1,\boldsymbol{x}_2)\in \mathcal{X}$ is called \textit{efficient}
		or \textit{Pareto optimal}, if there is no other $(\boldsymbol{x}'_1,\boldsymbol{x}'_2)\in \mathcal{X}$
		such that $z_1(\boldsymbol{x}'_1,\boldsymbol{x}'_2)\le z_1(\boldsymbol{x}_1,\boldsymbol{x}_2)$ and  $z_2(\boldsymbol{x}'_1,\boldsymbol{x}'_2)< z_2(\boldsymbol{x}_1,\boldsymbol{x}_2)$ or $z_1(\boldsymbol{x}'_1,\boldsymbol{x}'_2)< z_1(\boldsymbol{x}_1,\boldsymbol{x}_2)$ and  $z_2(\boldsymbol{x}'_1,\boldsymbol{x}'_2)\le z_2(\boldsymbol{x}_1,\boldsymbol{x}_2)$.
		If $(\boldsymbol{x}_1,\boldsymbol{x}_2)$ is efficient, then $\boldsymbol{z}(\boldsymbol{x}_1,\boldsymbol{x}_2)$ is called a \textit{nondominated
			point}. The set of all efficient solutions is
		denoted by $\mathcal{X}_E$. The set of all nondominated points $
		\boldsymbol{z}(\boldsymbol{x}_1,\boldsymbol{x}_2)$ for $(\boldsymbol{x}_1,\boldsymbol{x}_2) \in \mathcal{X}_E$ is denoted by
		$\mathcal{Y}_N$ and referred to as the \textit{nondominated
			frontier}.
	\end{doublespace}
\end{defn}
\vspace{-0.5in}
\begin{defn}
	\label{def:SupportedPoints}
	\begin{doublespace}
		If there exists a vector $(\lambda_1,\lambda_2)^\intercal \in
		\mathbb{R}^2_{>}:=\{\boldsymbol{s}\in \mathbb{R}^{2}:\boldsymbol{s}> \boldsymbol{0}\}$ such that $(\boldsymbol{x}_1^*,\boldsymbol{x}_2^*)\in\arg\min_{(\boldsymbol{x}_1,\boldsymbol{x}_2)\in\mathcal{X}}\lambda_1z_1(\boldsymbol{x}_1,\boldsymbol{x}_2)\allowbreak+\lambda_2z_2(\boldsymbol{x}_1,\boldsymbol{x}_2)$, then $(\boldsymbol{x}_1^*,\boldsymbol{x}_2^*)$ is called a
		\textit{supported efficient solution} and $\boldsymbol{z}(\boldsymbol{x}_1^*,\boldsymbol{x}_2^*)$ is called a
		\textit{supported nondominated point}. 
	\end{doublespace}
\end{defn}
\vspace{-0.5in}
\begin{defn}
	\label{def:ExteremePoints}
	\begin{doublespace}
		Let $\mathcal{Y}^{e}$ be the set of extreme points of the convex hull of $\mathcal{Y}$, that is the smallest convex set containing the set $\mathcal{Y}$. A point $\boldsymbol{z}(\boldsymbol{x}_1,\boldsymbol{x}_2) \in \mathcal{Y}$ is called an
		\textit{extreme supported nondominated point}, if $\boldsymbol{z}(\boldsymbol{x}_1,\boldsymbol{x}_2)$ is a supported nondominated point and $\boldsymbol{z}(\boldsymbol{x}_1,\boldsymbol{x}_2) \in \mathcal{Y}^e$.
	\end{doublespace}
	\vspace{-0.2 in}
\end{defn}

\begin{figure}[!htbp]
	\begin{center}
		\begin{tikzpicture}[scale=0.30,
		axis/.style={very thick, ->, >=stealth',dashed,draw=blue},line/.style={very thick},]			 	
		\tikzset{cross/.style={very thick,cross out, draw=black, minimum size=1.5*(#1-\pgflinewidth), inner sep=0pt, outer sep=0pt},
			cross/.default={5pt}}
		\draw [dashed] (18,9)--(20,9);
		\node[] at (25,9)  {\scriptsize The convex hull of $\mathcal{Y}$};
		\node[fill=black!40!green,regular polygon, regular polygon sides=3,inner sep=1 .5pt] at (19,7) {};
		\node[] at (30,7)  {\scriptsize Non-extreme supported nondominated point};
		\draw [fill=black] (19,5) circle (0.4);
		\node[] at (29,5)  {\scriptsize Extreme supported nondominated point};
		\draw [red,fill=red] (18.6,2.6) rectangle (19.4,3.4);
		\node[] at (27.7,3)  {\scriptsize Unsupported nondominated point};
		\draw [] (19,1) circle (0.4);
		\node[] at (24,1)  {\scriptsize Dominated point};			
		\draw [dashed] (0,11)--(13,10)--(13,5)--(11,0)--(4,4)--(0,11);
		\draw [axis] (-1,-1)--(-1,11);
		\draw [axis] (-1,-1)--(13,-1);
		\node[rotate=90] at (-3,5)  {$z_2(\boldsymbol{x}_1,\boldsymbol{x}_2)$};
		\node[] at (5,-3)  {$z_1(\boldsymbol{x}_1,\boldsymbol{x}_2)$};
		\draw [fill=black] (0,11) circle (0.2);
		\draw [fill=black] (4,4) circle (0.2);
		\draw [fill=black] (11,0) circle (0.2);
		\draw [red,fill=red] (2.9,7.9) rectangle (3.3,8.3);
		\draw [red,fill=red] (9.9,1.9) rectangle (10.3,2.3);
		\node[fill=black!40!green,regular polygon, regular polygon sides=3,inner sep=1.1pt] at (1.2,9) {};
		\node[fill=black!40!green,regular polygon, regular polygon sides=3,inner sep=1.1pt] at (5.7,3) {};
		\draw [] (4,5) circle (0.2);
		\draw [] (5,5) circle (0.2);
		\draw [] (6,5) circle (0.2);
		\draw [] (7,5) circle (0.2);
		\draw [] (8,5) circle (0.2);
		\draw [] (9,5) circle (0.2);
		\draw [] (10,5) circle (0.2);
		\draw [] (11,5) circle (0.2);
		\draw [] (12,5) circle (0.2);
		\draw [] (13,5) circle (0.2);
		\draw [] (4,6) circle (0.2);
		\draw [] (5,6) circle (0.2);
		\draw [] (6,6) circle (0.2);
		\draw [] (7,6) circle (0.2);
		\draw [] (8,6) circle (0.2);
		\draw [] (9,6) circle (0.2);
		\draw [] (10,6) circle (0.2);
		\draw [] (11,6) circle (0.2);
		\draw [] (12,6) circle (0.2);
		\draw [] (13,6) circle (0.2);
		\draw [] (4,7) circle (0.2);
		\draw [] (5,7) circle (0.2);
		\draw [] (6,7) circle (0.2);
		\draw [] (7,7) circle (0.2);
		\draw [] (8,7) circle (0.2);
		\draw [] (9,7) circle (0.2);
		\draw [] (10,7) circle (0.2);
		\draw [] (11,7) circle (0.2);
		\draw [] (12,7) circle (0.2);
		\draw [] (13,7) circle (0.2);
		\draw [] (4,8) circle (0.2);
		\draw [] (5,8) circle (0.2);
		\draw [] (6,8) circle (0.2);
		\draw [] (7,8) circle (0.2);
		\draw [] (8,8) circle (0.2);
		\draw [] (9,8) circle (0.2);
		\draw [] (10,8) circle (0.2);
		\draw [] (11,8) circle (0.2);
		\draw [] (12,8) circle (0.2);
		\draw [] (13,8) circle (0.2);
		\draw [] (4,9) circle (0.2);
		\draw [] (5,9) circle (0.2);
		\draw [] (6,9) circle (0.2);
		\draw [] (7,9) circle (0.2);
		\draw [] (8,9) circle (0.2);
		\draw [] (9,9) circle (0.2);
		\draw [] (10,9) circle (0.2);
		\draw [] (11,9) circle (0.2);
		\draw [] (12,9) circle (0.2);
		\draw [] (13,9) circle (0.2);
		\draw [] (4,10) circle (0.2);
		\draw [] (5,10) circle (0.2);
		\draw [] (6,10) circle (0.2);
		\draw [] (7,10) circle (0.2);
		\draw [] (8,10) circle (0.2);
		\draw [] (9,10) circle (0.2);
		\draw [] (10,10) circle (0.2);
		\draw [] (11,10) circle (0.2);
		\draw [] (12,10) circle (0.2);
		\draw [] (13,10) circle (0.2);
		\end{tikzpicture}
	\end{center}
	\caption{An illustration of different types of (feasible) points in the criterion space}
	\label{fig:PointTypes}
\end{figure}

In summary, based on Definition~\ref{def:nondominant}, the elements of $\mathcal{Y}$ can be divided into two groups including dominated and nondominated points. Furthermore, based on Definitions~\ref{def:SupportedPoints} and \ref{def:ExteremePoints}, the nondominated points can be divided into unsupported nondominated points, non-extreme supported nondominated points and extreme supported nondominated points. Overall, bi-objective optimization problems are concerned with finding all elements of $\mathcal{Y}_N$, that is all  nondominated points, including supported and unsupported nondominated points. An illustration of the set $\mathcal{Y}$ and its corresponding categories are shown in Figure~\ref{fig:PointTypes}.

\begin{figure}
	\begin{center}
		\begin{minipage}{0.20\linewidth}
			\begin{tikzpicture}[scale=0.30,
			axis/.style={very thick, ->, >=stealth',dashed,draw=blue},line/.style={very thick},]
			\tikzset{cross/.style={very thick,cross out, draw=black, minimum size=1.5*(#1-\pgflinewidth), inner sep=0pt, outer sep=0pt},
				cross/.default={5pt}}
			\draw [axis] (0,0)--(0,11);
			\draw [axis] (0,0)--(11,0);
			\node[] at (5,-1.5)  {$z_1(\boldsymbol{x}_2)$};
			\node[rotate=90] at (-1.5,5)  {$z_2(\boldsymbol{x}_2)$};
			\draw [fill=black] (1,11) circle (0.2);
			\draw [fill=black] (4,5) circle (0.2);
			\draw [fill=black] (11,1) circle (0.2);
			
			\draw [black!40!green] (1,11)--(4,5)--(11,1);
			\end{tikzpicture}
			\subcaption{ BOLP}
			\label{fig:frontierLP}
		\end{minipage}
		\hfil
		\begin{minipage}{0.20\linewidth}
			\begin{tikzpicture}[scale=0.30,
			axis/.style={very thick, ->, >=stealth',dashed,draw=blue},line/.style={very thick},]
			\tikzset{cross/.style={very thick,cross out, draw=black, minimum size=1.5*(#1-\pgflinewidth), inner sep=0pt, outer sep=0pt},
				cross/.default={5pt}}
			\draw [axis] (0,0)--(0,11);
			\draw [axis] (0,0)--(11,0);
			\node[] at (5,-1.5)  {$z_1(\boldsymbol{x}_1)$};
			\node[rotate=90] at (-1.5,5)  {$z_2(\boldsymbol{x}_1)$};
			\draw [fill=black] (1,11) circle (0.2);
			\draw [red, fill=red] (2.9,9.9) rectangle (3.3,10.3);
			\draw [fill=black] (4,5) circle (0.2);
			\draw [red, fill=red] (9.4,3.6) rectangle (9.8,4);
			\draw [fill=black] (11,1) circle (0.2);
			\end{tikzpicture}
			\subcaption{BOILP}
			\label{fig:frontierIP}
		\end{minipage}
		\hfil
		\begin{minipage}{0.20\linewidth}
			\begin{tikzpicture}[scale=0.30,
			axis/.style={very thick, ->, >=stealth',dashed,draw=blue},line/.style={very thick},]
			\tikzset{cross/.style={very thick,cross out, draw=black, minimum size=1.5*(#1-\pgflinewidth), inner sep=0pt, outer sep=0pt},
				cross/.default={5pt}}
			\draw [axis] (0,0)--(0,11);
			\draw [axis] (0,0)--(11,0);
			\node[] at (5,-1.5)  {$z_1(\boldsymbol{x}_1,\boldsymbol{x}_2)$};
			\node[rotate=90] at (-1.5,5)  {$z_2(\boldsymbol{x}_1,\boldsymbol{x}_2)$};
			
			\draw [fill=black] (1,11) circle (0.2);
			\draw [color=red,fill=red] (3,10) circle (0.2);
			\draw [fill=black] (4,5) circle (0.2);
			\draw [color=red,fill=white] (7.5,5) circle (0.2);
			\draw [color=red,fill=red] (9.5,3.5) circle (0.2);
			\draw [fill=black] (11,1) circle (0.2);

			\draw [color=red] (1,11)-- (3,10)--(4,5);
			\draw [color=red] (7.6,4.9)-- (9.5,3.5);
			\end{tikzpicture}
			\subcaption{BOMILP}
			\label{fig:frontierMIP}
		\end{minipage}\\
	\end{center}
	\caption{An illustration of the nondominated frontier} 
	\label{fig:Frontier}
\end{figure}
It is well-known that in a BOLP, both the set of efficient solutions $\mathcal{X}_E$ and the set of nondominated points $\mathcal{Y}_N$ are supported and
connected \citep{SAYIN199687}. Consequently, to
describe all nondominated points in a BOLP, it suffices to find all extreme supported nondominated points. A typical illustration of the nondominated frontier of a BOLP is displayed in Figure~\ref{fig:frontierLP}.

Since we assume that $\mathcal{X}$ is bounded, the set of nondominated points of a BOILP is finite. However, due to the existence of unsupported nondominated points in a BOILP, finding all nondominated points is more challenging than in a BOLP. A typical nondominated frontier of a BOILP is shown in Figure~\ref{fig:frontierIP} where the rectangles are unsupported nondominated points. 

Finding all nondominated points of a BOMILP is even more challenging. Nonetheless, if at most one of the objective functions of a BOMILP contains continuous decision variables, then the set of nondominated points is finite and BOILP solution approaches can be utilized to solve it \citep{SAD2014}. However, in all other cases that more than one objective function contain continuous decision variables, the nondominated frontier of a BOMILP may contain connected parts as well as supported and unsupported nondominated points. Therefore, in these cases, the set of nondominated points may not be finite and BOILP algorithms cannot be applied to solve them anymore. A typical nondominated frontier of a BOMILP is illustrated in Figure~\ref{fig:frontierMIP}, where even half-open (or open) line segments may exist in the nondominated frontier. Interested readers are referred to \cite{Ghosh2014}, \cite{Hamacher2007} and \cite{ Boland2015BBM,Boland2015TSM} for further discussions on the properties of BOILPs and BOMILPs and algorithms to solve them.

\begin{figure}[!htbp]
	\begin{center}
		\begin{minipage}{0.30\linewidth}
			\begin{tikzpicture}[scale=0.30,
			axis/.style={very thick, ->, >=stealth',dashed,draw=blue},line/.style={very thick},]
			\tikzset{cross/.style={very thick,cross out, draw=black, minimum size=1.5*(#1-\pgflinewidth), inner sep=0pt, outer sep=0pt},
				cross/.default={5pt}}
			\draw [axis] (-1,-1)--(-1,11);
			\draw [axis] (-1,-1)--(11,-1);
			\node[] at (-3,5)  {$z_2(\hat{\boldsymbol{\beta})}$};
			\node[] at (5,-3)  {$z_1(\hat{\boldsymbol{\beta}})$};
			\draw [fill=black] (0,11) circle (0.2);
			\draw [red, fill=red] (2.9,9.9) rectangle (3.3,10.3);
			\draw [red, fill=red] (3.9,8.9) rectangle (4.3,9.3);
			\draw [red, fill=red] (4.9,7.9) rectangle (5.3,8.3);
			\draw [red, fill=red] (5.9,6.9) rectangle (6.3,7.3);
			\draw [red, fill=red] (6.9,5.9) rectangle (7.3,6.3);
			\draw [red, fill=red] (7.9,4.9) rectangle (8.3,5.3);
			\draw [red, fill=red] (8.9,3.9) rectangle (9.3,4.3);
			\draw [red, fill=red] (9.9,2.9) rectangle (10.3,3.3);
			\draw [fill=black] (11,0) circle (0.2);
			\draw [] (10,5) circle (0.2);
			\draw [] (11,5) circle (0.2);
			\draw [] (12,5) circle (0.2);
			\draw [] (13,5) circle (0.2);
			\draw [] (10,6) circle (0.2);
			\draw [] (11,6) circle (0.2);
			\draw [] (12,6) circle (0.2);
			\draw [] (13,6) circle (0.2);
			\draw [] (10,7) circle (0.2);
			\draw [] (11,7) circle (0.2);
			\draw [] (12,7) circle (0.2);
			\draw [] (13,7) circle (0.2);
			\draw [] (10,8) circle (0.2);
			\draw [] (11,8) circle (0.2);
			\draw [] (12,8) circle (0.2);
			\draw [] (13,8) circle (0.2);
			\draw [] (10,9) circle (0.2);
			\draw [] (11,9) circle (0.2);
			\draw [] (12,9) circle (0.2);
			\draw [] (13,9) circle (0.2);
			\draw [] (10,10) circle (0.2);
			\draw [] (11,10) circle (0.2);
			\draw [] (12,10) circle (0.2);
			\draw [] (13,10) circle (0.2);
			\draw [dashed] (0,11)--(13,10)--(13,5)--(11,0)--(0,11);
			\end{tikzpicture}
			\subcaption{ The wighted sum approach fails to find rectangles}
			\label{fig:WSA-Weakness}
		\end{minipage}
		\hfil
		\begin{minipage}{0.30\linewidth}
			\begin{tikzpicture}[scale=0.30,
			axis/.style={very thick, ->, >=stealth',dashed,draw=blue},line/.style={very thick},
			Blackaxis/.style={very thick, ->, >=stealth',draw=black},line/.style={very thick},
			]
			\tikzset{cross/.style={very thick,cross out, draw=black, minimum size=1.5*(#1-\pgflinewidth), inner sep=0pt, outer sep=0pt},
				cross/.default={5pt}}
			\draw [axis] (-1,-1)--(-1,11);
			\draw [axis] (-1,-1)--(11,-1);
			\node[] at (-3,5)  {$z_2(\hat{\boldsymbol{\beta}})$};
			\node[] at (5,-3)  {$z_1(\hat{\boldsymbol{\beta}})$};
			\draw [fill=black] (2,2) circle (0.2);
			\draw [] (3,2) circle (0.2);
			\draw [] (4,2) circle (0.2);
			\draw [] (5,2) circle (0.2);
			\draw [] (6,2) circle (0.2);
			\draw [] (7,2) circle (0.2);
			\draw [] (8,2) circle (0.2);
			\draw [] (1.8,2.8) rectangle (2.2,3.2);
			\draw [] (3,3) circle (0.2);
			\draw [] (4,3) circle (0.2);
			\draw [] (5,3) circle (0.2);
			\draw [] (6,3) circle (0.2);
			\draw [] (7,3) circle (0.2);
			\draw [] (8,3) circle (0.2);
			\draw [] (1.8,3.8) rectangle (2.2,4.2);
			\draw [] (3,4) circle (0.2);
			\draw [] (4,4) circle (0.2);
			\draw [] (5,4) circle (0.2);
			\draw [] (6,4) circle (0.2);
			\draw [] (7,4) circle (0.2);
			\draw [] (8,4) circle (0.2);
			\draw [] (1.8,4.8) rectangle (2.2,5.2);
			\draw [] (3,5) circle (0.2);
			\draw [] (4,5) circle (0.2);
			\draw [] (5,5) circle (0.2);
			\draw [] (6,5) circle (0.2);
			\draw [] (7,5) circle (0.2);
			\draw [] (8,5) circle (0.2);
			\draw [] (2,6) circle (0.2);
			\draw [] (3,6) circle (0.2);
			\draw [] (4,6) circle (0.2);
			\draw [] (5,6) circle (0.2);
			\draw [] (6,6) circle (0.2);
			\draw [] (7,6) circle (0.2);
			\draw [] (8,6) circle (0.2);
			\draw [] (2,7) circle (0.2);
			\draw [] (3,6) circle (0.2);
			\draw [] (4,6) circle (0.2);
			\draw [] (5,6) circle (0.2);
			\draw [] (6,6) circle (0.2);
			\draw [] (7,6) circle (0.2);
			\draw [] (8,6) circle (0.2);
			\draw [] (3,7) circle (0.2);
			\draw [] (4,7) circle (0.2);
			\draw [] (5,7) circle (0.2);
			\draw [] (6,7) circle (0.2);
			\draw [] (7,7) circle (0.2);
			\draw [] (8,7) circle (0.2);
			\draw [fill=black] (-1,5.4) rectangle (11,5.6);
			\draw [Blackaxis] (0,5.4)--(0,4.4);
			\draw [Blackaxis] (10,5.4)--(10,4.4);
			\node[] at (13.5 ,5.5)  {\scriptsize $z_2(\hat{\boldsymbol{\beta}})\le k$};
			\end{tikzpicture}
			\subcaption{ The goal programming approach may incorrectly find rectangles}
			\label{fig:GBA-Weakness}
		\end{minipage}
		\caption{The set of feasible points in the criterion space}
		\label{fig:feasible points}
	\end{center}
\end{figure}

\subsection{Drawbacks of transforming \texttt{BSSP} into a single-objective model}
\label{sec:Issues}

Suppose that for each $\hat{\boldsymbol{\beta}}\in \mathcal{F}$, the corresponding point $(z_1(\hat{\boldsymbol{\beta}}),z_2(\hat{\boldsymbol{\beta}}))^\intercal$ is plotted into the criterion space. Figures~\ref{fig:WSA-Weakness} and \ref{fig:GBA-Weakness} show two typical plots of such pairs for all $\hat{\boldsymbol{\beta}}\in \mathcal{F}$. In these two figures, all filled circles and rectangles are nondominated points of the problem and unfilled rectangles and circles are dominated points. In Figure~\ref{fig:WSA-Weakness}, the region defined by the dashed lines is the convex hull of all feasible points. In this case, it is impossible that the wighted sum approach finds the filled rectangles for any arbitrary weight as all filled rectangles are unsupported nondominated points (i.e., they are interior points of the convex hull). So, this illustrates that there may exist many nondominated points, but the weighted sum approach can fail to find most of them for any arbitrary weight. Figure~\ref{fig:GBA-Weakness} is helpful for understanding the main drawback of the goal programming approach. It is obvious that depending on the value of $k$, the goal programming approach may find one of the unfilled rectangles which are dominated points. So, the main drawback of the goal programming approach is that it may even fail to find a nondominated point.     

The main contribution of our research presented here is to overcome both of these disadvantages by utilizing bi-objective optimization techniques. We note that in the literature of \texttt{BSSP}, $z_1(\hat{\boldsymbol{\beta}})$ is mainly defined as the sum of squared residuals. The reason lies in the fact that the sum of squared residuals is a smooth (convex) function. However, to be able to exploit existing bi-objective mixed integer linear programming solvers, we use the sum of absolute residuals for $z_1(\hat{\boldsymbol{\beta}})$. We conclude this section by providing two remarks.
\begin{rem}
	\begin{doublespace}
		If we incorporate additional linear constraints on the vector of parameter estimators of the regression model, $\hat{\boldsymbol{\beta}}$, it will be more likely that the goal programming approach fails to find a nondominated point.   
	\end{doublespace} 
\end{rem}
\vspace{-0.35in}
\begin{rem}
	\begin{doublespace}
		Unlike the weighted sum and goal programming approaches that new parameters $\lambda$ and $k$, respectively, should be employed and tuned by the user, the bi-objective optimization approach does not need any extra parameter. 
	\end{doublespace}
\end{rem}       
\vspace{-0.5in}

\section{A bi-objective formulation for \texttt{BSSP}}
\label{sec:formulation}

Let $X=[\boldsymbol{x}_1,\dots,\boldsymbol{x}_p]\in \mathbb{R}^{n\times p}$ be the model matrix (it is assumed that $\boldsymbol{x}_1=\boldsymbol{1}$), $\boldsymbol{\beta}\in \mathbb{R}^{p\times 1}$ be the vector of regression coefficients, and $\boldsymbol{y}\in \mathbb{R}^{n\times 1}$ be the response vector. It is assumed that $\boldsymbol{\beta}$ is unknown and should be estimated. Let  $\hat{\boldsymbol{\beta}}\in \mathbb{R}^{p\times 1}$ denote an estimate for $\boldsymbol{\beta}$. To solve \texttt{BSSP} for this set of data, we construct the following BOMILP and denote it by $\texttt{BSSP-BOMILP}$:
\begin{align}
\min \ &\{\sum_{i=1}^n \gamma_i,\  \sum_{j=1}^p r_j\} \label{OF:1}\allowdisplaybreaks\\
\mbox{such that: }  &r_jl_j\le \hat{\beta}_j\le r_ju_j &&\mbox{ for } j=1,\dots,p \label{const:z_j}\allowdisplaybreaks\\
& y_i-\sum_{j=1}^px_{ij}\hat{\beta}_j\le \gamma_i  &&\mbox{ for } i=1,\dots,n \label{const:Abs1}\allowdisplaybreaks\\
& \sum_{j=1}^px_{ij}\hat{\beta}_j-y_i\le \gamma_i  &&\mbox{ for } i=1,\dots,n\label{const:Abs2} \allowdisplaybreaks\\
&r_j\in\{0,1\} &&\mbox{ for } j=1,\dots,p\allowdisplaybreaks\\
& \gamma_i\ge 0  &&\mbox{ for } i=1,\dots,n\\
& \hat{\beta}_j\in \mathbb{R}  &&\mbox{ for } j=1,\dots,p,\label{const:Last}
\end{align}
where $l_j\in \mathbb{R}$ and $u_j\in \mathbb{R}$ are, respectively, a lower bound and an upper bound (known) for $\hat{\beta}_j$ for $j=1,\dots,p$, $\gamma_i$ for $i=1,\dots,n$ is a non-negative continuous variable that captures the value of $|y_i-\sum_{j=1}^px_{ij}\hat{\beta}_j|$ in any efficient solution, and $r_j$ for $j=1,\dots,p$ is a binary decision variable that takes the value of one if $\hat{\beta}_j\ne 0$, implying that the predictor $j$ is active. By these definitions, for any efficient solution, the first objective function, $z_1(\hat{\boldsymbol{\beta}})=\sum_{i=1}^n \gamma_i$, takes the value of the sum of absolute residuals and the second objective function, $z_2(\hat{\boldsymbol{\beta}})=\sum_{j=1}^p r_j$, computes the number of predictors. Constraint \eqref{const:z_j} ensures that if $\hat{\beta}_j\ne 0$ then $r_j=1$ for $j=1,\dots,p$. Constraints \eqref{const:Abs1} and \eqref{const:Abs2} guarantee that $|y_i-\sum_{j=1}^px_{ij}\hat{\beta}_j|\le \gamma_i$ for $i=1,\dots,n$. Note that since we minimize the first objective function, we have  $|y_i-\sum_{j=1}^px_{ij}\hat{\beta}_j|=\gamma_i$ for $i=1,\dots,n$ in an efficient solution.
\begin{rem}
	\begin{doublespace}
		The $\texttt{BSSP-BOMILP}$ can handle additional linear constraints and variables. Furthermore, by choosing tight bounds in Constraint \eqref{const:z_j}, we can speed up the solution time of $\texttt{BSSP-BOMILP}$. Hence, we should try to choose $l_j$/$u_j$ as large/small as possible.
	\end{doublespace}
\end{rem}
\vspace{-0.5in}
\begin{rem}
	\begin{doublespace}
		Since only one of the objective functions in $\texttt{BSSP-BOMILP}$ contains continuous variables, based on our discussion in Section~\ref{sec:Preliminaries}, the set of nondominated points of $\texttt{BSSP-BOMILP}$ is finite. More precisely, the nondominated frontier of $\texttt{BSSP-BOMILP}$ can have at most $p+1$ number of nondominated points as $\sum_{j=1}^p r_j\in\{0,1,\dots,p\}$. So we can use BOILP solvers such as the $\epsilon$-constraint method or the balanced box method to solve $\texttt{BSSP-BOMILP}$ \citep{Chankong1983,Boland2015BBM}.
	\end{doublespace}
\end{rem}
\vspace{-0.5in}
\begin{rem}
	\label{rem:Remark2}
	\begin{doublespace}
		The solution $(\boldsymbol{\gamma}^B,\boldsymbol{r}^B, \hat{\boldsymbol{\beta}}^B):=(|\boldsymbol{y}|,\boldsymbol{0},\boldsymbol{0})$ is a trivial efficient solution of $\texttt{BSSP-BOMILP}$ which attains the minimum possible value for the second objective function. Accordingly, the point $(\sum_{i=1}^n\gamma_i^B,\sum_{j=1}^pr^B_j)=(\sum_{i=1}^n|y_i|,0)$ is a trivial nondominated point of $\texttt{BSSP-BOMILP}$ where there is no parameter selected in the estimated regression model. Hence, we exclude this trivial nondominated point by adding the constraint $\sum_{j=1}^pr_j\ge 1$ to $\texttt{BSSP-BOMILP}$.  
		
	\end{doublespace}
\end{rem} 
\vspace{-0.5in}
\subsection{Bounds for the regression coefficients}

In this section, we develop a data-driven approach to find bounds $l_j$ and $u_j$ for $j=1,\dots,p$ such that $l_j\le \hat{\beta}_j\le u_j$, in the lack of any additional information. To achieve this, we firstly present the following proposition.
\begin{prop}
	\label{prop:bound}
	\begin{doublespace}	
		Let $m$ be the median of response observations $y_{1},\dots,y_{n}$. If $(\boldsymbol{\gamma}^*,\boldsymbol{r}^*, \hat{\boldsymbol{\beta}}^*)$ is an efficient solution of $\texttt{BSSP-BOMILP}$, then $\sum_{i=1}^n \gamma^*_i\le \sum_{i=1}^n|y_i-m|$.
	\end{doublespace}
\end{prop}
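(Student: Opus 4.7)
The strategy is to exhibit a simple feasible reference solution of \texttt{BSSP-BOMILP} whose first-objective value is exactly $\sum_{i=1}^n |y_i-m|$, and then to invoke efficiency of $(\boldsymbol{\gamma}^*,\boldsymbol{r}^*,\hat{\boldsymbol{\beta}}^*)$ to rule out the possibility that its first objective is strictly larger. Because $\boldsymbol{x}_1=\boldsymbol{1}$, I would take the ``intercept-only at the median'' point: set $\hat{\beta}_1=m$, $\hat{\beta}_j=0$ for $j=2,\dots,p$, $\gamma_i=|y_i-m|$ for $i=1,\dots,n$, $r_1=1$, and $r_j=0$ for $j\geq 2$. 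Constraints \eqref{const:Abs1}--\eqref{const:Abs2} then hold with equality, constraint \eqref{const:z_j} is satisfied provided $l_1\leq m\leq u_1$ (a prerequisite that the data-driven bound construction being developed in this section must respect), and the additional constraint $\sum_j r_j\geq 1$ from Remark~\ref{rem:Remark2} is fulfilled. The criterion-space image of this reference is $\bigl(\sum_{i=1}^n|y_i-m|,\ 1\bigr)$.

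Next, I would compare this reference against the efficient solution. By Remark~\ref{rem:Remark2}, $\sum_{j=1}^p r_j^*\geq 1$, so the reference's second coordinate satisfies $1\leq \sum_j r_j^*$. If $\sum_i\gamma_i^*>\sum_i|y_i-m|$ were to hold, then the reference would be strictly better than $(\boldsymbol{\gamma}^*,\boldsymbol{r}^*,\hat{\boldsymbol{\beta}}^*)$ on the first objective and weakly better on the second, which by Definition~\ref{def:nondominant} would make $(\boldsymbol{\gamma}^*,\boldsymbol{r}^*,\hat{\boldsymbol{\beta}}^*)$ a dominated solution, contradicting its efficiency. Hence $\sum_{i=1}^n\gamma_i^*\leq \sum_{i=1}^n|y_i-m|$, as claimed.

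I do not anticipate a substantive obstacle; once the reference point is written down the argument collapses to a one-line efficiency comparison. The only point worth flagging is the mild feasibility condition $l_1\leq m\leq u_1$, which is natural because this section is itself prescribing those bounds. It is also worth emphasising that the proof does \emph{not} invoke the classical characterisation of the median as the minimiser of the sum of absolute deviations; all that is needed is that the intercept-only-at-the-median point is feasible and yields first-objective value $\sum_i|y_i-m|$.
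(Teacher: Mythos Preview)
Your argument is correct and follows exactly the same route as the paper's proof: construct the intercept-only-at-the-median feasible point, use Remark~\ref{rem:Remark2} to ensure $\sum_j r_j^*\ge 1$, and conclude by the definition of efficiency. Your added remark that feasibility requires $l_1\le m\le u_1$ is a detail the paper leaves implicit.
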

\begin{proof}
	\vspace{-0.3in}
	\begin{doublespace}
		Let consider the feasible solution  $(\boldsymbol{\gamma},\boldsymbol{r}, \hat{\boldsymbol{\beta}})$ where $r_1=1$, $\beta_1=m$, $r_j=\beta_j=0$ for $j=2,\dots, p$,  $\gamma_i=|y_i-\sum_{j=1}^px_{ij}\hat{\beta}_j|$ for $i=1,\dots,n$. So, we have $\gamma_i=|y_i-m|$ for $i=1,\dots,n$ because $x_{i1}=1$ for $i=1,\dots,n$ in $\texttt{BSSP-BOMILP}$. Since by Remark~\ref{rem:Remark2}, $\sum_{j=1}^pr^*_j\ge 1=\sum_{j=1}^pr_j$, we must have $\sum_{i=1}^n \gamma^*_i\le \sum_{i=1}^n|y_i-m|=\sum_{i=1}^n \gamma_i$ to keep $(\boldsymbol{\gamma}^*,\boldsymbol{r}^*, \hat{\boldsymbol{\beta}}^*)$ an efficient solution. \qed
	\end{doublespace}
\end{proof}
\vspace{-0.5in}
\begin{rem}
	\label{rem:bound}
	\begin{doublespace}	
		It is readily seen that if we replace $m$ with any other real number, the inequality given in Proposition~\ref{prop:bound} still holds. However, as the minimum of $\sum_{i=1}^n|y_i-\hat{\beta}_1|$ is achieved at $\hat{\beta}_1=m$ \citep{Neil1990}, Proposition~\ref{prop:bound} provides the best upper bound for $\sum_{i=1}^n \gamma^*_i$. 
	\end{doublespace}
\end{rem}
\vspace{-0.4in}

Motivated from Proposition~\ref{prop:bound}, we can solve the following optimization problem to find $u_j$ for $j=1,\dots,p$:
\vspace{-0.1in}
\begin{align}
\label{eq:labelbound}
u_j:=\max\{\hat{\beta}_j: \sum_{i=1}^n|y_i-\sum_{j'=1}^px_{ij'}\hat{\beta}_{j'}|\le \sum_{i=1}^n|y_i-m|,\ \hat{\boldsymbol{\beta}}\in\mathbb{R}^p \}.
\end{align}
There are several ways to transform \eqref{eq:labelbound} into a linear program (e.g., see \cite{Dielman2005}). Here, we propose the following linear programing model:

\begin{equation}
\begin{aligned}
\label{eq:labelboundLP}
u_j:=\max\{\hat{\beta}_j: & \sum_{i=1}^n\gamma_i\le \sum_{i=1}^n|y_i-m|,\allowdisplaybreaks\\ &y_i-\sum_{j'=1}^px_{ij'}\hat{\beta}_{j'}\le \gamma_i \quad \mbox{ for } i=1,\dots,n,\allowdisplaybreaks\\
&\sum_{j'=1}^px_{ij'}\hat{\beta}_{j'}-y_i\le \gamma_i \quad \mbox{ for } i=1,\dots,n,\allowdisplaybreaks\\
& \hat{\boldsymbol{\beta}}\in\mathbb{R}^p,\ \boldsymbol{\gamma}\in \mathbb{R}^n_\ge\}.
\end{aligned}
\end{equation}
It should be noted that \eqref{eq:labelboundLP} is a relaxation of \eqref{eq:labelbound} since $\gamma_i$ over-calculates $|y_i-\sum_{j'=1}^px_{ij'}\hat{\beta}_{j'}|$ for $i=1,\dots,n$. Analogously, $l_j$ for $j=1,\dots,p$ can be computed by changing `$\max$' into `$\min$' in \eqref{eq:labelboundLP}.

\section{Computational results}
\label{sec:Comp}

We conduct a computational study to show the performance of the $\epsilon$-constraint method on $\texttt{BSSP-BOMILP}$, numerically. We use C++ to code the $\epsilon$-constraint method. In this computational study, the algorithm uses CPLEX 12.7 as the single-objective integer programming solver. All computational experiments are carried out on a Dell PowerEdge R630 with two Intel Xeon E5-2650 2.2 GHz 12-Core Processors (30MB), 128GB RAM, and the RedHat Enterprise Linux 6.8 operating system. We allow CPLEX to employ at most 10 threads at the same time. 

We design six classes of instances, each denoted by $C(p,n)$ where $p\in\{20,40\}$ and $n\in \{2p,3p,4p\}$. Based on this construction, we generate three regression models for each class as follows:
\begin{itemize}
	\item Set all $x_{i1}=1$ and all other $x_{ij}$ for $j>1$ are randomly drawn from the discrete uniform distribution on interval $[-50,50]$; 
	\item Two steps are taken to construct $y_i$ for $i=1,\dots,n$: (1) A vector $\boldsymbol{\beta}$ is generated such that the two third of its components are zero, and the others are randomly drawn from the uniform distribution on interval $(0,1)$; (2) Set $y_i=\sum_{j=1}^px_{ij}\beta_j+\varepsilon_i$ (with at most one decimal place) where $\varepsilon_i$ is randomly generated from the standard normal distribution;	
	\item Optimal values of $l_j$ and $u_j$ for $j=1,\dots,p$ are computed by solving \eqref{eq:labelboundLP}.
\end{itemize}

\begin{table}[!htbp]
	\centering
	\scriptsize
	\caption{Numerical results obtained by running the $\epsilon$-constraint method}
	\begin{tabular}{|c|r|r|r|r|r|r|c|c|}
		\hline
		\multicolumn{ 1}{|c|}{\textbf{Class}} & \multicolumn{ 2}{c|}{\textbf{Instance 1}} & \multicolumn{ 2}{c|}{\textbf{Instance 2}} & \multicolumn{ 2}{c|}{\textbf{Instance 3}} & \multicolumn{ 2}{c|}{\textbf{Average}} \\ \cline{ 2- 9}
		\multicolumn{ 1}{|c|}{} & \multicolumn{1}{c|}{\textbf{Time(Sec.)}} & \multicolumn{1}{c|}{\textbf{\#NDPs}} & \multicolumn{1}{c|}{\textbf{Time(Sec.)}} & \multicolumn{1}{c|}{\textbf{\#NDPs}} & \multicolumn{1}{c|}{\textbf{Time(Sec.)}} & \multicolumn{1}{c|}{\textbf{\#NDPs}} & \textbf{Time(Sec.)} & \textbf{\#NDPs} \\ \hline
		\textbf{C(20,40)} & 4.1 & 21 & 3.7 & 21 & 3.8 & 21 & \textbf{3.8} & \textbf{21.0} \\ \hline
		\textbf{C(20,60)} & 4.6 & 21 & 5.4 & 21 & 4.3 & 21 & \textbf{4.8} & \textbf{21.0} \\ \hline
		\textbf{C(20,80)} & 5.2 & 21 & 6.0 & 21 & 6.1 & 21 & \textbf{5.8} & \textbf{21.0} \\ \hline
		\textbf{C(40,80)} & 264.4 & 41 & 385.5 & 41 & 290.6 & 41 & \textbf{313.5} & \textbf{41.0} \\ \hline
		\textbf{C(40,120)} & 313.0 & 41 & 921.5 & 41 & 247.0 & 41 & \textbf{493.8} & \textbf{41.0} \\ \hline
		\textbf{C(40,160)} & 275.6 & 41 & 327.9 & 41 & 591.0 & 41 & \textbf{398.2} & \textbf{41.0} \\ \hline
	\end{tabular}
	\label{Tab:Results}
\end{table}
Table~\ref{Tab:Results} reports the numerical results for all 18 instances. For each instance, there are two columns `Time(Sec.)' and `\#NDPs' showing the solution time in seconds and the number of nondominated points, respectively. All nondominated points can be found for instances with $p=20$ and $p=40$ in about 5 seconds and 7 minutes in average, respectively. 

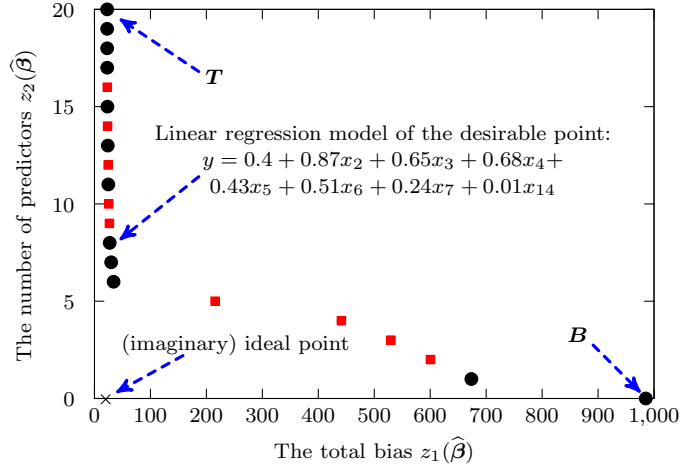
\begin{figure}[!htbp]
	\begin{center}
		\begin{tikzpicture}[gnuplot,scale=0.7, axis/.style={very thick, ->, >=stealth',dashed,draw=blue},line/.style={very thick}]
		\path (0.000,0.000) rectangle (12.500,8.750);
		\gpcolor{color=gp lt color border}
		\gpsetlinetype{gp lt border}
		\gpsetlinewidth{1.00}
		\draw[gp path] (1.320,0.985)--(1.500,0.985);
		\draw[gp path] (11.947,0.985)--(11.767,0.985);
		\node[gp node right] at (1.136,0.985) {\scriptsize 0};
		\draw[gp path] (1.320,2.834)--(1.500,2.834);
		\draw[gp path] (11.947,2.834)--(11.767,2.834);
		\node[gp node right] at (1.136,2.834) {\scriptsize 5};
		\draw[gp path] (1.320,4.683)--(1.500,4.683);
		\draw[gp path] (11.947,4.683)--(11.767,4.683);
		\node[gp node right] at (1.136,4.683) {\scriptsize 10};
		\draw[gp path] (1.320,6.532)--(1.500,6.532);
		\draw[gp path] (11.947,6.532)--(11.767,6.532);
		\node[gp node right] at (1.136,6.532) {\scriptsize 15};
		\draw[gp path] (1.320,8.381)--(1.500,8.381);
		\draw[gp path] (11.947,8.381)--(11.767,8.381);
		\node[gp node right] at (1.136,8.381) { \scriptsize 20};
		\draw[gp path] (1.320,0.985)--(1.320,1.165);
		\draw[gp path] (1.320,8.381)--(1.320,8.201);
		\node[gp node center] at (1.320,0.677) { \scriptsize 0};
		\draw[gp path] (2.383,0.985)--(2.383,1.165);
		\draw[gp path] (2.383,8.381)--(2.383,8.201);
		\node[gp node center] at (2.383,0.677) { \scriptsize 100};
		\draw[gp path] (3.445,0.985)--(3.445,1.165);
		\draw[gp path] (3.445,8.381)--(3.445,8.201);
		\node[gp node center] at (3.445,0.677) { \scriptsize 200};
		\draw[gp path] (4.508,0.985)--(4.508,1.165);
		\draw[gp path] (4.508,8.381)--(4.508,8.201);
		\node[gp node center] at (4.508,0.677) { \scriptsize 300};
		\draw[gp path] (5.571,0.985)--(5.571,1.165);
		\draw[gp path] (5.571,8.381)--(5.571,8.201);
		\node[gp node center] at (5.571,0.677) { \scriptsize 400};
		\draw[gp path] (6.634,0.985)--(6.634,1.165);
		\draw[gp path] (6.634,8.381)--(6.634,8.201);
		\node[gp node center] at (6.634,0.677) { \scriptsize 500};
		\draw[gp path] (7.696,0.985)--(7.696,1.165);
		\draw[gp path] (7.696,8.381)--(7.696,8.201);
		\node[gp node center] at (7.696,0.677) { \scriptsize 600};
		\draw[gp path] (8.759,0.985)--(8.759,1.165);
		\draw[gp path] (8.759,8.381)--(8.759,8.201);
		\node[gp node center] at (8.759,0.677) { \scriptsize 700};
		\draw[gp path] (9.822,0.985)--(9.822,1.165);
		\draw[gp path] (9.822,8.381)--(9.822,8.201);
		\node[gp node center] at (9.822,0.677) { \scriptsize 800};
		\draw[gp path] (10.884,0.985)--(10.884,1.165);
		\draw[gp path] (10.884,8.381)--(10.884,8.201);
		\node[gp node center] at (10.884,0.677) { \scriptsize 900};
		\draw[gp path] (11.947,0.985)--(11.947,1.165);
		\draw[gp path] (11.947,8.381)--(11.947,8.201);
		\node[gp node center] at (11.947,0.677) {\scriptsize 1,000};
		\draw[gp path] (1.320,8.381)--(1.320,0.985)--(11.947,0.985)--(11.947,8.381)--cycle;
		\node[gp node center,rotate=-270] at (0.0,4.683) {\scriptsize The number of predictors $z_2(\hat{\boldsymbol{\beta}})$};
		\node[gp node center] at (6.633,0.0) {\scriptsize The total bias  $z_1(\hat{\boldsymbol{\beta}})$};
		\draw [axis] (3.3,7.2)--(1.8,8.2);
		\node[] at (3.6,7.1)  {\scriptsize $\boldsymbol{T}$};
		\draw [axis] (10.790,2)--(11.69,1.1);
		\node[] at (10.5,2.2)  {\scriptsize $\boldsymbol{B}$};
		\draw [axis] (3.3,5.2)--(1.8,4);
		\node[] at (6.8,6)  {\scriptsize Linear regression model of the desirable point:};
		\node[] at (6.8,5.5)  {\scriptsize $y=0.4+0.87x_2+0.65x_3+0.68x_4+$};
		\node[] at (6.8,5.0)  {\scriptsize $0.43x_5+ 0.51x_6+0.24x_7+0.01x_{14}$};
		\node[] at (4,2)  {\scriptsize (imaginary) ideal point};
		\draw [axis] (3,1.8)--(1.7326,1.1);
		\gpsetpointsize{6.00}
		\gppoint{gp mark 2}{(1.5326,0.97)}
		\gpsetpointsize{6.00}
		\gppoint{gp mark 7}{(1.561,8.381)}
		\gppoint{gp mark 7}{(1.561,8.011)}
		\gppoint{gp mark 7}{(1.561,7.641)}
		\gppoint{gp mark 7}{(1.562,7.272)}
		\gppoint{gp mark 7}{(1.566,6.532)}
		\gppoint{gp mark 7}{(1.573,5.792)}
		\gppoint{gp mark 7}{(1.583,5.053)}
		\gppoint{gp mark 7}{(1.609,3.943)}
		\gppoint{gp mark 7}{(1.637,3.574)}
		\gppoint{gp mark 7}{(1.683,3.204)}
		\gppoint{gp mark 7}{(8.478,1.355)}
		\gppoint{gp mark 7}{(11.790,0.985)}
		\gpcolor{color=gp lt color 0}
		\gpsetpointsize{4.00}
		\gppoint{gp mark 5}{(1.564,6.902)}
		\gppoint{gp mark 5}{(1.569,6.162)}
		\gppoint{gp mark 5}{(1.580,5.423)}
		\gppoint{gp mark 5}{(1.593,4.683)}
		\gppoint{gp mark 5}{(1.604,4.313)}
		\gppoint{gp mark 5}{(3.611,2.834)}
		\gppoint{gp mark 5}{(6.009,2.464)}
		\gppoint{gp mark 5}{(6.950,2.094)}
		\gppoint{gp mark 5}{(7.701,1.725)}
		\gpcolor{color=gp lt color border}
		\draw[gp path] (1.320,8.381)--(1.320,0.985)--(11.947,0.985)--(11.947,8.381)--cycle;
		\gpdefrectangularnode{gp plot 1}{\pgfpoint{1.320cm}{0.985cm}}{\pgfpoint{11.947cm}{8.381cm}}
		\end{tikzpicture}
		\caption{The nondominated frontier of the Instance 1 from the Class $C(20,40)$}
		\label{fig:NDF-Example}
	\end{center}
\end{figure}
To highlight the drawbacks of existing approaches including the weighted sum approach and the goal programming approach, the nondominated frontier of the Instance 1 from the Class $C(20,40)$ is illustrated in Figure~\ref{fig:NDF-Example}. The filled rectangles and circles are unsupported and supported nondominated points, respectively. As we discussed previously, it is impossible to find any of the unsupported nondominated points using the weighted sum approach. Also, observe that many of the nondominated points lies on an almost vertical line. This implies that all these points are almost optimal for the goal programming approach when $k=7,\dots,20$.

We note that selecting a desirable nondominated point in the nondominated frontier depends on decision maker(s). Here, we introduce a heuristic algorithm to do so. Let $\boldsymbol{T}=(T_1,T_2)^\intercal\in\mathbb{R}^2$ and $\boldsymbol{B}=(B_1,B_2)^\intercal\in\mathbb{R}^2$ be the top and bottom endpoints of the nondominated frontier. One may simply choose the point that has the minimum Euclidean distance from the (imaginary) \textit{ideal} point, that is $(T_1,B_2)^\intercal$. Based on this algorithm, in Figure~\ref{fig:NDF-Example}, the (imaginary) ideal point is $(22.6,0)^\intercal$ and the closest nondominated point to it is $(27.2, 8
)^\intercal$. The generated instance that we discuss in Figure~\ref{fig:NDF-Example} is  $y=0.42+ 0.86x_2+0.63x_3+ 0.68x_4+0.42x_5+0.50x_6+0.25x_7$ and the estimated linear regression model corresponding to the selected nondominated point is $y=0.4+0.87x_2+0.65x_3+0.68x_4+0.43x_5+ 0.51x_6+0.24x_7+0.01x_{14}$, which are very close together.

\section{Conclusion}
\label{sec:Finalremarks}

In this paper, we discussed the practical drawbacks of transforming the best subset selection problem in linear regression modeling into a single-objective optimization problem. To resolve those disadvantages, we developed a new bi-objective optimization model for \texttt{BSSP}. The efficacy of this new model was shown through numerical results. Hopefully, the simplicity, versatility and performance of this new bi-objective optimization model encourage practitioners to consider it for constructing linear regression models.

\bibliographystyle{elsarticle-harv} 
\bibliography{references.bib}  
\end{document}